\newtheorem{theorem}{Theorem}
\newtheorem{lemma}[theorem]{Lemma}
\theoremstyle{definition}
\newtheorem{example}{Example}
\begin{document}
\title{Simpler proof for nonlinearity of majority function}
\author {Thomas W. Cusick
	\thanks{Email: \text{cusick@buffalo.edu}}}
\affil {Department of Mathematics, University at Buffalo, Buffalo, NY 14260, USA}

\date{}	
\maketitle
\begin{abstract}
Given a Boolean function $f$,  the (Hamming) weight $wt(f)$ and the nonlinearity $N(f)$ are well known
to be important in designing functions that are useful in cryptography.  The nonlinearity is expensive to compute, in general, so
any formulas giving the nonlinearity for particular functions $f$ are significant.  The well known majority function has been extensively 
studied in a cryptographic context for the last dozen years or so, and there is a formula for its nonlinearity.  The known proofs for this formula rely on many detailed results for the Krawtchouk polynomials.  This paper gives a much simpler proof.
\end{abstract}
%-----------------------------------------------------------------------------------------------------
\textbf{{\large Key words:}} Hamming weight, nonlinearity, Boolean functions, majority function, Walsh transform.\\
{\bf Mathematics Subject Classifications (2010)} 94C10  94A60  06E30
%-------------------------------------------

\section{Introduction}
Boolean functions have many applications, particularly in coding theory and cryptography. A detailed account of the latter applications can be found in the book \cite{CS}. If we define $\mathbb{V}_n$ to be the vector space of dimension $n$ over the finite field $GF(2) = \{0, 1\},$ then an $n$ variable Boolean function 
$f(x_1, x_2, . . . , x_n) = f({\bf x})$ is a map from $\mathbb{V}_n$ to $GF(2).$ Every Boolean function $f({\bf x})$ has a unique polynomial representation (usually called the algebraic normal form \cite[p. 8]{CS}), and the degree of $f$ is the degree of this polynomial. A function of degree at most $1$ is called affine, and if the constant term is $0$ such a function is called linear. 
We let $B_n$ denote the set of all Boolean functions in $n$ variables, with addition and multiplication done mod $2.$
If we list the $2^n$ elements of $\mathbb{V}_n$ as $v_0 = (0,...,0), v_1 = (0,...,0,1), ...$ in lexicographic order (we abbreviate this as lexico order below), then the $2^n$-vector 
$(f(v_0), f(v_1), \ldots, f(v_{2^n-1}))$ is called the truth table of $f.$ The weight (also called Hamming weight) $wt(f)$ of $f$ is defined to be the number of $1's$ in the truth table for $f$. In many cryptographic uses of Boolean functions, it is important that the truth table of each function $f$ has an equal number of $0's$ and $1's$; in that case, we say that the function $f$ is balanced.

The distance $d(f,g)$ between two Boolean functions $f$ and $g$ in the same number of variables is defined by $$d(f, g) = wt(f + g),$$
where the polynomial addition is done mod $2.$ An important concept in cryptography is the nonlinearity $N(f)$ defined by 
$$N(f) = \underset{a ~affine~~~~~~}{\min d(f,a)}.$$ In order for a Boolean function to be useful in a cryptographic application, it is usually necessary that the function has high nonlinearity (see, for example, 
\cite[p. 122]{CS}).  So-called Fourier analysis of Boolean functions (see \cite[Chapter 2]{CS}) is very important in cryptography and other contexts.  The efficient computation of values of the nonlinearity is important here, and for this a very important tool is the Walsh transform of a Boolean function $f_n$ in $n$ variables. The Walsh transform of $f_n$ is the map $W_f: \mathbb{V}_n \rightarrow \mathbb{R}$ defined by
$$ W_f({\bf w}) = \sum_{x \in \mathbb{V}_n} (-1)^{f({\bf x}) +{\bf w} \cdot {\bf x}}, $$
where the values of $f= f_n$ are taken to be the real numbers $0$ and $1.$ This Walsh transform has also been important in physics and other sciences for at least $40$ years.  In this paper we want the Walsh transform because of the well known formula (see \cite[Th. 2.21, p. 17]{CS}, where $\hat{f}$ is used instead of $f$)
\begin{equation}
\label{NfW}
N(f_n) = 2^{n-1}- \frac{1}{2} \underset{{\bf u}\in \mathbb{V}_n} \max |W_f({\bf u})|.
\end{equation}
We shall use the obvious fact
$$W_f({\bf 0}) = 2^n - 2wt(f)$$
without comment in the rest of the paper.

\section {Weight = nonlinearity if weight is small enough}
It follows from the definitions that the nonlinearity is always $\leq$ the weight.  The following lemma gives a useful sufficient condition for the weight and nonlinearity to be equal. This lemma is well known to experts, but I have not found a statement of it in the literature.
The proof below was shown to me by Claude Carlet; it is much simpler than the proof using the Walsh transform given in \cite{C17}, which 
is a preliminary version of this paper. Before \cite{C17} was written, this lemma was stated in a version of \cite{Sang}. 

\begin{lemma}
\label{Bu}
Suppose $f$ is a Boolean function in $n$ variables with $wt(f) \leq 2^{n-2}.$ Then $wt(f)=N(f)$.
\end{lemma}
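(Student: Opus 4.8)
The plan is to prove the two inequalities $N(f) \le wt(f)$ and $N(f) \ge wt(f)$ separately. The first is immediate: since the constant function $0$ is affine and $d(f,0) = wt(f+0) = wt(f)$, the minimum defining $N(f)$ can only be at most this value, so $N(f) \le wt(f)$ (this is the remark already made at the start of the section). All the real work therefore lies in showing that $d(f,a) \ge wt(f)$ for every affine function $a$, which forces $N(f) \ge wt(f)$ and hence equality.

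The key tool I would use is the elementary weight identity $wt(f+a) = wt(f) + wt(a) - 2c$, where $c$ is the number of points at which $f$ and $a$ are both equal to $1$. Since $c \le wt(f)$, this immediately gives the lower bound $wt(f+a) \ge wt(a) - wt(f)$. Next I would classify the affine functions by their weight: writing $a = {\bf w}\cdot{\bf x} + \varepsilon$ with $\varepsilon \in \{0,1\}$, the function $a$ is the constant $0$ (weight $0$) when ${\bf w}={\bf 0}$ and $\varepsilon = 0$, is the constant $1$ (weight $2^n$) when ${\bf w}={\bf 0}$ and $\varepsilon = 1$, and has weight exactly $2^{n-1}$ whenever ${\bf w}\ne{\bf 0}$. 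In particular every affine $a$ other than the constant $0$ satisfies $wt(a) \ge 2^{n-1}$.

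For the constant $0$ we already have the equality $d(f,0) = wt(f)$. For every other affine $a$, combining the lower bound above with the hypothesis $wt(f) \le 2^{n-2}$ yields
$$d(f,a) = wt(f+a) \ge wt(a) - wt(f) \ge 2^{n-1} - wt(f) \ge 2^{n-1} - 2^{n-2} = 2^{n-2} \ge wt(f).$$
Hence every affine $a$ satisfies $d(f,a) \ge wt(f)$, so the minimum defining $N(f)$ is attained at $a = 0$ and equals $wt(f)$.

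The only genuine obstacle is the weight identity and its consequence $wt(f+a) \ge wt(a) - wt(f)$; once this is in hand the result is just bookkeeping over the four types of affine functions, and the hypothesis $wt(f)\le 2^{n-2}$ enters at exactly one point, namely in guaranteeing $2^{n-1} - wt(f) \ge wt(f)$. I expect no difficulty beyond stating the weight identity carefully, which is presumably the simplification that Carlet's argument provides over the Walsh-transform computation used in \cite{C17}.
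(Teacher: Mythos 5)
Your proof is correct and takes essentially the same approach as the paper: your key bound $d(f,a) \ge wt(a) - wt(f)$ is exactly the paper's triangle-inequality step $wt(a) = d(a,{\bf 0}) \le d(a,f) + d(f,{\bf 0})$, and both arguments then conclude from the fact that every nonconstant affine function is balanced and hence has weight $2^{n-1}$. The only differences are presentational: the paper argues by contradiction where you argue directly, and you justify the triangle-inequality step via an explicit counting identity rather than citing it as such.
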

\begin{proof}
We give a proof by contradiction.  Suppose $a$ is an affine function such that
$$ d(a,f) < wt(f) \leq 2^{n-2}.$$
Then $a$ is not constant and by the triangle inequality ($\bf 0$ is the zero function)
$$wt(a)=d(a, {\bf 0}) \leq d(a,f) + d(f,{\bf 0})<2^{n-2}+2^{n-2} = 2^{n-1}.$$ 
This is a contradiction since the nonconstant affine function $a$ is balanced, and so has weight $2^{n-1}.$
\end{proof}

Lemma \ref{Bu} is sharp in the sense that it is no longer true if we replace the upper bound on $wt(f)$ by $2^{n-2}+1.$ For example, the function with $n=3$ having truth table $(0,0,0,1,0,0,1,1)$ (its algebraic normal form is $x_1x_2x_3 + x_1x_2 + x_2x_3$) has weight $3$ and nonlinearity $1.$   However, sometimes we can use Lemma \ref{Bu} to
evaluate the nonlinearity of functions $g_n$ in $n$ variables even if $wt(g_n) \neq N(g_n),$ and even if $wt(g_n) > 2^{n-2}.$  An example of 
this (Theorem \ref{B2n}) is given in the next section.

\section{Finding nonlinearity of the majority function }
\label{majfnc}
The Boolean majority function $M_k({\bf x})$ can be defined in several slightly different ways, but here we use the most common one, namely
$$M_k({\bf x}) = 1 \text{ if and only if } wt({\bf x}) \geq k/2.$$ The majority function seems to have been first mentioned in a cryptographic context in the 1991 book \cite[pp. 70-80]{Ding91}, where only the case of $k$ odd was considered.  The function has been of special interest in cryptography for the last dozen 
years or so because this function and variants of it can be proven to have optimal algebraic immunity. We do not need to explain this technical
concept here; interested readers can find a discussion of it in \cite[pp. 174-176]{CS}. A useful recent survey of the applications of majority functions to algebraic immunity is given in \cite{Carl13}.  The papers \cite{BP05, Dalai06, Fu09, SM07, Su15, ST} 
and others referenced in those papers all contain recent work on cryptographic applications of these functions. It is necessary to know the 
nonlinearity of the majority function to begin such studies.  

A frequently quoted determination of the nonlinearity of the majority function was given in \cite[Th. 3, p. 52]{Dalai06}, using many detailed results on the Krawtchouk polynomials (these date back to 1929; see \cite[pp. 150-154]{McWS} for a study of these polynomials and proofs of some of their properties). 
It is convenient to deal with the cases of odd and even $k$ separately, so we define:
\begin{equation}
\label{odd}
M_{2n+1}({\bf x}) = 1 \text{ if and only if } wt({\bf x}) \geq n+1
\end{equation}
and 
\begin{equation}
\label{even}
M_{2n}({\bf x}) = 1 \text{ if and only if } wt({\bf x}) \geq n.
\end{equation} 
For any $k,$ we let $M(k)$ denote the truth table of $M_k({\bf x}).$ We let $A(k)$ and $B(k)$ denote, respectively, the left and right halves of the truth table, so \[M(k) = A(k) B(k)\] (juxtaposition of the truth tables, each one being thought of simply as a bitstring). We give a much simpler proof for the value of the nonlinearity of the majority function in Theorem \ref{maj} below, but first we need three preliminary lemmas. We use the notation $C(f)$ for the complement of $f,$ that is, the function obtained by switching $0$ to $1$ and $1$ to $0$ for every entry in the truth table of $f.$ Similarly, if $S$ is a bitstring, then $C(S)$ is its complement. Also, $S^*$ denotes the bitstring $S$ in reverse order.
\begin{lemma}
\label{M2n}
We have $M(2n+1) = C(M(2n))^*~M(2n)$ for $n \geq 2.$
\end{lemma}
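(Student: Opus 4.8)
The plan is to split the truth table $M(2n+1)$ into its two halves according to the value of the first (most significant) coordinate $x_1$, and to identify each half directly from the defining weight condition \eqref{odd}. Since the vectors of $\mathbb{V}_{2n+1}$ are listed in lexico order, the coordinate $x_1$ is the most significant bit; hence the left half $A(2n+1)$ records the values of $M_{2n+1}$ on inputs with $x_1 = 0$ and the right half $B(2n+1)$ records the values on inputs with $x_1 = 1$, in each case as the remaining $2n$ coordinates $(x_2,\ldots,x_{2n+1})$ run through $\mathbb{V}_{2n}$ in lexico order.

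First I would treat the right half. Setting $x_1 = 1$ in \eqref{odd}, the condition $wt(x_1,\ldots,x_{2n+1}) \geq n+1$ becomes $1 + wt(x_2,\ldots,x_{2n+1}) \geq n+1$, that is $wt(x_2,\ldots,x_{2n+1}) \geq n$. By \eqref{even} this is exactly the condition defining $M_{2n}$ on the $2n$ variables $(x_2,\ldots,x_{2n+1})$, so $B(2n+1) = M(2n)$.

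Next I would treat the left half. Setting $x_1 = 0$, the condition becomes $wt(x_2,\ldots,x_{2n+1}) \geq n+1$, so $A(2n+1)$ is the truth table of the $2n$-variable function that equals $1$ exactly when the weight of its argument is at least $n+1$. To match this with $C(M(2n))^*$, the key observation is that reversing a truth table corresponds to complementing all input coordinates: if a $2n$-variable function $h$ has truth table $S$, then $S^*$ is the truth table of $\mathbf{y} \mapsto h(\overline{\mathbf{y}})$, because in lexico order the index of $\overline{\mathbf{y}}$ is $2^{2n}-1$ minus the index of $\mathbf{y}$. Now $C(M(2n))$ is the function equal to $1$ iff $wt(\mathbf{y}) \leq n-1$, and replacing $\mathbf{y}$ by $\overline{\mathbf{y}}$ turns this into $wt(\overline{\mathbf{y}}) \leq n-1$, i.e.\ $2n - wt(\mathbf{y}) \leq n-1$, i.e.\ $wt(\mathbf{y}) \geq n+1$. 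Thus $C(M(2n))^*$ is precisely the truth table of the function ``weight at least $n+1$'', so $A(2n+1) = C(M(2n))^*$. Combining the two halves gives $M(2n+1) = A(2n+1)\,B(2n+1) = C(M(2n))^*\,M(2n)$.

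The only delicate point is the bookkeeping in the previous paragraph: one must correctly identify the effect of the reversal operation $(\cdot)^*$ as complementation of the input vector, and then carry out the weight arithmetic $2n - wt(\mathbf{y}) \leq n-1 \Longleftrightarrow wt(\mathbf{y}) \geq n+1$ without sign or off-by-one slips. Everything else is an immediate consequence of the definitions \eqref{odd} and \eqref{even}.
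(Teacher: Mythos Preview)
Your proof is correct and follows essentially the same approach as the paper: split $M(2n+1)$ according to the leading bit, identify the two halves via the weight conditions \eqref{odd} and \eqref{even}, and recognize that the operation $(\cdot)^*$ on a truth table corresponds to complementing the input vector. The only organizational difference is that you establish $B(2n+1)=M(2n)$ directly, whereas the paper first proves the auxiliary relation $B(2n+1)=C(A(2n+1))^*$ and deduces $B(2n+1)=M(2n)$ from $A(2n+1)=C(M(2n))^*$; your route is slightly more streamlined but the underlying argument is the same.
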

\begin{proof}
By \eqref{odd}, we can describe the two halves of $M(2n+1)$ by
\begin{equation}
\label{Aodd}
A(2n+1) = \{0 {\bf y}: wt({\bf y}) \geq n+1, {\bf y}  \text{ in lexico order}\} 
%= \{C(1 {\bf y}): wt({\bf y}) \leq n\} 
%=  \{C({\bf y} 1): wt({\bf y}) \leq n\}*
\end{equation}  
and
\begin{equation}
\label{Aeven}
B(2n+1) = \{1 {\bf y}: wt({\bf y}) \geq n,{\bf y}  \text{ in lexico order} \}. 
\end{equation}
Now \eqref{Aodd} implies
$$A(2n+1)^* = \{0 {\bf y}: wt({\bf y}) \geq n+1, {\bf y}  \text{ in reverse lexico order}\},$$
so \eqref{Aeven} gives
\begin{equation}
\label{AA*}
B(2n+1) = C(A(2n+1))^*.
\end{equation}
It follows from \eqref{even} and  \eqref{Aodd} that
\begin{align*}
A(2n+1) &= \{C(1 {\bf y}): wt({\bf y}) \leq n, {\bf y}  \text{ in lexico order}\} \\&= \{C({\bf y} 1): wt({\bf y}) \leq n, {\bf y}  
\text{ in reverse  lexico order}\}\\
&= C(M(2n))^*
\end{align*}
and then \eqref{AA*} gives $B(2n+1) = M(2n).$
\end{proof}

Below is an example, in which we use the notation $a_i$ for a string of $i$ symbols $a = 0$ or $1.$
\begin{example}
$M(5) = 0_710_3101_3 ~~0_3101_301_7$
\end{example}
Note that it follows from \eqref{Aodd} and \eqref{AA*} that $wt(M(2n+1)) = 2^{2n},$ so $M_{2n+1}({\bf x})$ is a
balanced function.
\begin{lemma}
\label{NAA*}
For any truth table $A,$ the nonlinearity of the function with truth table  $A~ C(A)^*$ is given by
$$ N(A~ C(A)^*) = 2N(A).$$
\end{lemma}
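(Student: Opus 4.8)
The plan is to read the truth table $A$ as belonging to a function $g$ in $m$ variables (so $A$ has length $2^m$), and to regard $h$, the function with truth table $A\,C(A)^*$, as a function in $m+1$ variables whose leading input bit selects one of the two halves. Thus $h(0,\mathbf{x}) = g(\mathbf{x})$ on the first half. The linchpin of the whole argument is to express the second half in terms of $g$: since listing $\mathbb{V}_m$ in lexico order is the same as counting in binary from $0$ to $2^m-1$, reversing a truth table sends the vector $\mathbf{x}$ to its componentwise complement $\mathbf{x}+\mathbf{1}$ (with $\mathbf{1}$ the all-ones vector). Combined with the bitwise complement $C$, this gives the identity $h(1,\mathbf{x}) = 1 + g(\mathbf{x}+\mathbf{1})$. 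Everything else is bookkeeping on top of this identity.

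With the structural identity in hand, I would invoke the nonlinearity formula \eqref{NfW} and compute the Walsh transform of $h$ directly. Writing a point of $\mathbb{V}_{m+1}$ as $(w_0,\mathbf{w}')$, I split the defining sum over the two values of the leading bit. The part with leading bit $0$ is exactly $W_g(\mathbf{w}')$. For the part with leading bit $1$ I substitute the identity $h(1,\mathbf{x}) = 1 + g(\mathbf{x}+\mathbf{1})$ and then change variables by $\mathbf{y} = \mathbf{x}+\mathbf{1}$; the shift turns $\mathbf{w}'\cdot\mathbf{x}$ into $\mathbf{w}'\cdot\mathbf{y} + \mathbf{w}'\cdot\mathbf{1}$, producing a global sign and again the factor $W_g(\mathbf{w}')$. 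Collecting terms yields
$$W_h(w_0,\mathbf{w}') = \left(1 + (-1)^{1 + w_0 + \mathbf{w}'\cdot\mathbf{1}}\right) W_g(\mathbf{w}').$$

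The bracketed factor is $0$ or $2$, and for each fixed $\mathbf{w}'$ exactly one of the two choices of $w_0$ makes it $2$. Hence $\max_{\mathbf{u}}|W_h(\mathbf{u})| = 2\max_{\mathbf{w}'}|W_g(\mathbf{w}')|$. Feeding this into \eqref{NfW}, once for $h$ (with $n=m+1$) and once for $g$ (with $n=m$), the factors of $2$ line up so that $N(h) = 2^m - \max_{\mathbf{w}'}|W_g(\mathbf{w}')| = 2\bigl(2^{m-1} - \tfrac12\max_{\mathbf{w}'}|W_g(\mathbf{w}')|\bigr) = 2N(g)$, which is the claim.

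I expect the only genuine difficulty to be the first step: verifying carefully that reversal of the truth table implements input complementation and that $h(1,\mathbf{x}) = 1 + g(\mathbf{x}+\mathbf{1})$, since this is where the lexico ordering and the two operations $C$ and $*$ interact. The sign $(-1)^{\mathbf{w}'\cdot\mathbf{1}}$ coming from the variable shift must also be tracked honestly, though it harmlessly disappears once the maximum over $w_0$ is taken. As an alternative that avoids the Walsh transform altogether, one can argue purely combinatorially: for a fixed linear part $\mathbf{w}'$ the two affine functions $\mathbf{w}'\cdot\mathbf{x}$ and $1+\mathbf{w}'\cdot\mathbf{x}$ satisfy $d(g,\mathbf{w}'\cdot\mathbf{x}) + d(g,1+\mathbf{w}'\cdot\mathbf{x}) = 2^m$, and a short case analysis using the same identity shows that the best approximation of $h$ by an affine function with a given leading slope costs exactly twice the best approximation of $g$ with linear part $\mathbf{w}'$; minimizing over $\mathbf{w}'$ then gives $N(h) = 2N(g)$ directly.
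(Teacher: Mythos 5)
Your proof is correct, but it follows a genuinely different route from the paper's. The paper stays entirely on the truth-table side: it observes that $d(A,a)=d(C(A),C(a))=d(C(A)^*,C(a)^*)$ for every affine $a$, and that $a\,C(a)^*$ is again affine (the Folklore Lemma, \cite[Lemma 2.2, p. 8]{CS}); these two facts give $d(A\,C(A)^*,\,a\,C(a)^*)=2d(A,a)$ for the upper bound, while for the lower bound every affine function on $m+1$ variables splits as $\alpha_1\alpha_2$ with both halves affine, and the same distance identity turns $d(C(A)^*,\alpha_2)$ into $d(A,C(\alpha_2)^*)\ge N(A)$. You instead translate the truth-table operations into the functional identity $h(1,\mathbf{x})=1+g(\mathbf{x}+\mathbf{1})$ (your verification that reversal of the truth table implements input complementation is the same structural fact the paper's distance identity encodes), then compute the Walsh spectrum $W_h(w_0,\mathbf{w}')=\bigl(1+(-1)^{1+w_0+\mathbf{w}'\cdot\mathbf{1}}\bigr)W_g(\mathbf{w}')$ and finish with \eqref{NfW}; all steps check out, including the claim that for each $\mathbf{w}'$ exactly one choice of $w_0$ makes the bracket equal to $2$. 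What your approach buys is more information: you obtain the entire Walsh spectrum of $A\,C(A)^*$, not just its nonlinearity, and you avoid quoting the Folklore Lemma. What the paper's approach buys is brevity and elementarity, consistent with its stated aim of replacing Walsh-transform arguments by distance arguments (it did exactly that for Lemma \ref{Bu} as well). Incidentally, the combinatorial alternative you sketch in your last sentence --- pairing affine functions by leading slope and using preservation of distances under $C$ and $*$ --- is, once made precise, essentially the paper's proof.
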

\begin{proof}
For any affine function $a$ we have
$$d(A, a) = d(C(A), C(a)) = d(C(A)^*, C(a)^*).$$
Hence the lemma follows from the fact that $aC(a)^*$ is an affine function whenever $a$ is (see the
well known Folklore Lemma \cite[Lemma 2.2, p. 8]{CS}).
\end{proof}
\begin{lemma}
\label{wtAodd}
We have $wt(A(2n+1)) = 2^{2n-1} - \frac{1}{2}\binom{2n}{n}$ for $n \geq 2.$
\end{lemma}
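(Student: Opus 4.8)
The plan is to read off from \eqref{Aodd} a purely combinatorial description of $wt(A(2n+1))$ as a sum of binomial coefficients, and then evaluate that sum by the standard symmetry argument for $\binom{2n}{j}$. This avoids any appeal to Krawtchouk polynomials and reduces the lemma to a one-line counting identity.

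First I would observe that $A(2n+1)$ lists the values of $M_{2n+1}$ at the inputs $0{\bf y}$, where ${\bf y}$ runs over all $2^{2n}$ strings in $\mathbb{V}_{2n}$. Prefixing a $0$ does not change the weight, so by \eqref{Aodd} the corresponding bit equals $1$ precisely when $wt({\bf y}) \geq n+1$. Since exactly $\binom{2n}{j}$ of the strings ${\bf y}$ have weight $j$, this gives
$$wt(A(2n+1)) = \sum_{j=n+1}^{2n}\binom{2n}{j}.$$

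The remaining step is to evaluate $S := \sum_{j=n+1}^{2n}\binom{2n}{j}$. Using the symmetry $\binom{2n}{j} = \binom{2n}{2n-j}$ I would rewrite the same sum as $S = \sum_{j=0}^{n-1}\binom{2n}{j}$. Adding the two expressions for $S$ and inserting the single remaining middle term $\binom{2n}{n}$ then accounts for each term of $\sum_{j=0}^{2n}\binom{2n}{j} = 2^{2n}$ exactly once, so $2S + \binom{2n}{n} = 2^{2n}$, which rearranges to the claimed $S = 2^{2n-1} - \tfrac{1}{2}\binom{2n}{n}$.

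There is no serious obstacle here: the only point requiring care is the first step, namely correctly translating the set-builder description in \eqref{Aodd} into the weight count $\sum_{j=n+1}^{2n}\binom{2n}{j}$, in particular noting that fixing the leading bit to $0$ leaves the threshold at $n+1$ for the remaining $2n$ coordinates. I would also remark that the identity in fact holds for every $n \geq 0$; the hypothesis $n \geq 2$ is presumably retained only for uniformity with the neighboring lemmas used to prove Theorem \ref{maj}.
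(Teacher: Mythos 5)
Your proof is correct and is essentially the paper's own argument: both translate the description \eqref{Aodd} into a count of weight-$j$ bitstrings of length $2n$ and finish with the binomial theorem plus the symmetry $\binom{2n}{j}=\binom{2n}{2n-j}$. The only cosmetic difference is that you count the $1$'s directly as $\sum_{j=n+1}^{2n}\binom{2n}{j}$, while the paper counts the complement, writing $wt(A(2n+1)) = 2^{2n}-\sum_{j=0}^{n}\binom{2n}{j}$ before applying the same identity.
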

\begin{proof}
If we let ${\bf b}_{2n}$ denote a bitstring of length $2n,$ then \eqref{Aodd} gives
$$wt(A(2n+1)) = 2^{2n} - |\{{\bf b}_{2n}:~wt({\bf b}_{2n}) \leq n\}|= 2^{2n} - \sum_{j=0}^{n} \binom{2n}{j}$$
and now the Binomial Theorem implies the formula in the lemma.
\end{proof}
Our next example illustrates Lemmas \ref{M2n} and \ref{wtAodd}.
\begin{example} 
\label{ex2}
\begin{align*}
M(6) &= 0_710_3101_3 ~~0_3101_301_7~~0_3101_301_7~~01_{15} \\
A(7) &= 0_{15}1~~0_710_3101_3~~0_710_3101_3 ~~0_3101_301_7
\end{align*}
\end{example} 

Now the nonlinearity formulas for  $M_k({\bf x})$ are given in the following
theorem.
\begin{theorem}
\label{maj}
The nonlinearity of the majority function for $n \geq 2$ is given by 
\begin{equation}
\label{Nodd}
N(M(2n+1)) = 2wt(A(2n+1))= 2^{2n} - \binom{2n}{n}
\end{equation}
and
\begin{equation}
\label{Neven}
N(M(2n)) = \frac{1}{2}N(M(2n+1)) = 2^{2n-1} - \frac{1}{2} \binom{2n}{n}.
\end{equation}
\end{theorem}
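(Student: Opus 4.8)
The plan is to collapse both formulas into a single nonlinearity computation for $A(2n+1)$ and then evaluate that by an elementary symmetry argument, with no Krawtchouk machinery.

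First I would put $M(2n+1)$ into the doubling form. Combining Lemma \ref{M2n} with \eqref{AA*} from its proof gives $M(2n+1) = A(2n+1)\,C(A(2n+1))^*$, so Lemma \ref{NAA*} applied with $A = A(2n+1)$ yields
\[ N(M(2n+1)) = 2\,N(A(2n+1)). \]
Everything then follows once I establish the single fact $N(A(2n+1)) = wt(A(2n+1))$: plugging in Lemma \ref{wtAodd} gives \eqref{Nodd}, and \eqref{Neven} drops out of the same decomposition (handled at the end).

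The heart of the argument — and the step I expect to be the main obstacle — is proving $N(A(2n+1)) = wt(A(2n+1))$ \emph{despite} $wt(A(2n+1)) > 2^{2n-2}$, so Lemma \ref{Bu} cannot be invoked directly. I would regard $A(2n+1)$ as the symmetric function $f$ on $\mathbb{V}_{2n}$ with $f(\mathbf{x}) = 1 \iff wt(\mathbf{x}) \geq n+1$. Since $N(f) \leq d(f,\mathbf{0}) = wt(f)$ always, it suffices to show $d(f,a) \geq wt(f)$ for every \emph{nonconstant} affine $a(\mathbf{x}) = \mathbf{c}\cdot\mathbf{x} + \epsilon$. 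Such an $a$ is balanced, so
\[ d(f,a) = wt(f) + 2^{2n-1} - 2\,\bigl|\{\mathbf{x}: f(\mathbf{x}) = 1,\ a(\mathbf{x}) = 1\}\bigr|, \]
and the claim reduces to the correlation bound $\bigl|\{\mathbf{x}: wt(\mathbf{x}) \geq n+1,\ a(\mathbf{x}) = 1\}\bigr| \leq 2^{2n-2}$. I would prove this by producing, for each such $a$, an involution $\iota(\mathbf{x}) = \mathbf{x} \oplus \mathbf{v}$ that fixes the set $\{a = 1\}$ and sends every high-weight point to weight $\leq n$: take $\mathbf{v} = \mathbf{1}$ when $wt(\mathbf{c})$ is even, and $\mathbf{v} = \mathbf{1}\oplus\mathbf{e}_j$ for some $j$ with $c_j = 1$ when $wt(\mathbf{c})$ is odd. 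In either case $\mathbf{c}\cdot\mathbf{v} = 0$, so $a(\iota(\mathbf{x})) = a(\mathbf{x})$ and $\iota$ preserves $\{a=1\}$; and a one-line weight count gives $wt(\iota(\mathbf{x})) = 2n - wt(\mathbf{x}) \leq n-1$ in the first case and $wt(\iota(\mathbf{x})) = 2n-1-wt(\mathbf{x})+2x_j \leq n$ in the second, whenever $wt(\mathbf{x}) \geq n+1$. Thus $\iota$ injects $\{wt \geq n+1,\ a=1\}$ into $\{wt \leq n,\ a=1\}$; since these two sets partition $\{a=1\}$, which has size $2^{2n-1}$, the first has size at most $2^{2n-2}$, as required. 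The delicate point here is that the parity of $wt(\mathbf{c})$ forces the two different choices of $\mathbf{v}$, and checking that each simultaneously satisfies $\mathbf{c}\cdot\mathbf{v}=0$ and lowers the weight enough is where the argument must be carried out carefully.

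Finally I would assemble the two statements. For the odd case, $N(M(2n+1)) = 2\,N(A(2n+1)) = 2\,wt(A(2n+1)) = 2^{2n} - \binom{2n}{n}$ by Lemma \ref{wtAodd}, which is \eqref{Nodd}. For the even case I would use $M(2n) = B(2n+1) = C(A(2n+1))^*$ from the proof of Lemma \ref{M2n}; because $a \mapsto C(a)^*$ is a bijection of the affine functions onto themselves (the Folklore Lemma cited in Lemma \ref{NAA*}), the same distance identities $d(A,a) = d(C(A)^*, C(a)^*)$ give $N(M(2n)) = N(A(2n+1)) = \tfrac12 N(M(2n+1))$, which is exactly \eqref{Neven}.
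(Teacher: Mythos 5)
Your proposal is correct, and at the decisive step it takes a genuinely different route from the paper. The common skeleton is the same: via Lemma \ref{M2n} and \eqref{AA*} one has $M(2n+1) = A(2n+1)\,C(A(2n+1))^*$, so Lemma \ref{NAA*} gives $N(M(2n+1)) = 2N(A(2n+1))$, and Lemma \ref{wtAodd} supplies the binomial value; likewise your identification $M(2n) = C(A(2n+1))^*$ and the invariance of nonlinearity under complementation and reversal is exactly how the paper gets \eqref{Neven} from Lemmas \ref{M2n} and \ref{NAA*}. The difference is how the actual value is pinned down: the paper does not prove it at all, but quotes \cite[Th. 4.20, p. 74]{Ding91}, where $N(M(2n+1)) = 2^{2n} - \binom{2n}{n}$ is obtained via the Walsh transform formula \eqref{NfW}, and then invokes Lemma \ref{wtAodd} to rewrite that value as $2wt(A(2n+1))$. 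You instead prove, self-containedly, the single fact $N(A(2n+1)) = wt(A(2n+1))$ by an elementary pairing argument: for a nonconstant affine $a(\mathbf{x}) = \mathbf{c}\cdot\mathbf{x}+\epsilon$ on $\mathbb{V}_{2n}$ you bound $|\{\mathbf{x}: wt(\mathbf{x})\geq n+1,\ a(\mathbf{x})=1\}| \leq 2^{2n-2}$ using the involution $\mathbf{x}\mapsto\mathbf{x}\oplus\mathbf{v}$, with $\mathbf{v}=\mathbf{1}$ when $wt(\mathbf{c})$ is even and $\mathbf{v}=\mathbf{1}\oplus\mathbf{e}_j$ ($c_j=1$) when $wt(\mathbf{c})$ is odd; both cases check out ($\mathbf{c}\cdot\mathbf{v}=0$, and the image weights $2n-wt(\mathbf{x})\leq n-1$, respectively $2n-1-wt(\mathbf{x})+2x_j\leq n$, land in the complementary half of $\{a=1\}$, which has total size $2^{2n-1}$). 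What your route buys is a proof with no external ingredient: it replaces the only nontrivial citation in the paper's proof, and it directly establishes $N(A(2n+1)) = wt(A(2n+1)) = N(M(2n))$, the very ``weight equals nonlinearity'' fact that the paper can only deduce as a remark after the theorem and which, as the paper notes, is not implied by Lemma \ref{Bu} since $wt(A(2n+1)) > 2^{2n-2}$. What the paper's route buys is brevity: granting \cite{Ding91}, its proof is three sentences. One small patch to yours: the reduction to nonconstant affine $a$ should also dispose of the two constant functions; $d(f,\mathbf{0}) = wt(f)$ is trivial, and $d(f,\mathbf{1}) = 2^{2n} - wt(f) \geq wt(f)$ holds because $wt(A(2n+1)) = 2^{2n-1} - \frac{1}{2}\binom{2n}{n} < 2^{2n-1}$ by Lemma \ref{wtAodd}.
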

\begin{proof}
The value $2^{2n} - \binom{2n}{n}$ for $M(2n+1)$ (in a different notation) was obtained via \eqref{NfW} 
in \cite[Th. 4.20, p. 74]{Ding91}. The proof uses only elementary properties of binomial coefficients but is
long enough so that we will not repeat it here.  Now Lemma \ref{wtAodd} completes the proof of \eqref{Nodd}.
Finally, Lemmas \ref{M2n} and \ref{NAA*} imply \eqref{Neven}.
\end{proof} 
Note that Lemma \ref{M2n} gives $A(2n+1) = C(M(2n))^*$ (see Example \ref{ex2}) and with \eqref{Neven} plus Lemma \ref{wtAodd} we obtain
$$N(A(2n+1)) = N(M(2n)) = wt(A(2n+1),$$
which is a weight = nonlinearity result not implied by Lemma \ref{Bu}.

We can use Lemma \ref{Bu} to obtain even more information about the majority function, for example 
we can determine $N(B(2n)),$ which is given in the next theorem. The next example, which illustrates how
the truth tables of $M(i)$ change as $i$ increases, may be helpful in reading the proof.
\begin{example} 
\begin{align*}
M(2n-1) &= A(2n-1)~~B(2n-1) \\
M(2n)~~~~~ &= A(2n-1)~~B(2n-1)~~B(2n)\\
M(2n+1) &= Q1(2n+1)~~ M(2n-1)~~M(2n-1)~~B(2n)
\end{align*}
\end{example} 

\begin{theorem}
\label{B2n}
The nonlinearity of $B(2n)$ for $n \geq 3$ is equal to
\begin{align}
\label{B2na}
wt(C(B(2n))^*) &= 2 \sum_{j=n+1}^{2n-2}\binom{2 n - 2}{j} +  \binom{2 n - 2}{n}\\ 
\label{B2nb}
&= \sum_{j=n+1}^{2n-2}\binom{2 n - 2}{j} + 2^{2n-3} - \frac{1}{2}\binom{2n-2}{n-1}.
\end{align}
\end{theorem}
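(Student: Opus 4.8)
The plan is to use the invariances of nonlinearity to replace $B(2n)$ by the lighter function $C(B(2n))^*$, and then prove a general weight-equals-nonlinearity statement for threshold functions by induction. Write $T_{m,t}$ for the symmetric function $[wt(\mathbf{y})\ge t]$ on $m$ variables and $w(m,t)=\sum_{j=t}^{m}\binom{m}{j}$ for its weight. Since $B(2n)$ is the restriction of $M_{2n}$ to inputs with first coordinate $1$, it equals $T_{2n-1,n-1}$, and the computation used in Lemma \ref{M2n} gives $C(B(2n))^*=T_{2n-1,n+1}$. The distance identities in the proof of Lemma \ref{NAA*} (namely $d(g,b)=d(C(g),C(b))=d(g^*,b^*)$, together with the fact that $C(b)$ and $b^*$ are affine when $b$ is) show $N(f)=N(C(f))=N(f^*)$, so $N(B(2n))=N(C(B(2n))^*)=N(T_{2n-1,n+1})$. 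As nonlinearity never exceeds weight, $N(T_{2n-1,n+1})\le w(2n-1,n+1)$, and the theorem reduces to the reverse inequality plus the elementary use of Pascal's rule and the symmetry of binomial coefficients that rewrites $w(2n-1,n+1)=2^{2n-2}-\tfrac12\binom{2n}{n}$ in the two forms \eqref{B2na} and \eqref{B2nb}.

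The naive move is to apply Lemma \ref{Bu} to $T_{2n-1,n+1}$, but $w(2n-1,n+1)>2^{2n-3}$ as soon as $n\ge 5$, so Lemma \ref{Bu} fails in all but the smallest cases and the lower bound on the nonlinearity must be produced by hand. For this I would use superadditivity of nonlinearity under concatenation: if $F=F_0F_1$ is split by its first variable into functions on one fewer variable, then $N(F)\ge N(F_0)+N(F_1)$. This follows from the same elementary observation underlying Lemma \ref{NAA*}: an affine $a$ restricts on the two halves to $a'$ and $a'+c$ with $a'$ affine and $c\in\{0,1\}$, so $d(F,a)=d(F_0,a')+d(F_1,a'+c)\ge N(F_0)+N(F_1)$. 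Splitting $T_{m,t}$ on its first variable gives exactly $T_{m,t}=T_{m-1,t}\,T_{m-1,t-1}$, with the matching weight identity $w(m,t)=w(m-1,t)+w(m-1,t-1)$ again from Pascal's rule.

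The heart of the argument is then to prove, by strong induction on $m$, that $N(T_{m,t})=w(m,t)$ whenever $t\ge \lceil m/2\rceil+1$ (the threshold sits at least one above the middle). Given such an $(m,t)$ there are three cases. If $w(m,t)\le 2^{m-2}$ the claim is Lemma \ref{Bu}. If $m$ is even and $t=m/2+1$, then $C(T_{m,t})^*=T_{m,m/2}=M(m)$, so the invariances give $N(T_{m,t})=N(M(m))$, and the value of $N(M(m))$ from Theorem \ref{maj} (equation \eqref{Neven}) equals $w(m,m/2+1)$ by the computation of the first paragraph; this settles the one ``heavy'' shape that happens to be a majority function. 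In every remaining case ($m$ odd, or $m$ even with $t\ge m/2+2$) both children $T_{m-1,t}$ and $T_{m-1,t-1}$ still satisfy threshold $\ge\lceil(m-1)/2\rceil+1$, so the induction hypothesis gives $N=w$ for each; superadditivity and the weight identity then yield $N(T_{m,t})\ge w(m-1,t)+w(m-1,t-1)=w(m,t)$, and with the trivial upper bound this is an equality. Applying the claim with $m=2n-1$, $t=n+1$ finishes the proof.

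The main obstacle is precisely this lower bound, and concretely it is making the induction close up. Splitting lowers one child's threshold, and the danger is that the threshold slips down to the middle, where weight and nonlinearity genuinely differ (as they do for the balanced function $M(2n-1)$, whose nonlinearity is strictly below its weight). I expect the crux to be the observation that the only such ``critical'' child ever reached as a leaf is the even-variable majority $M(m)$ itself, whose nonlinearity is already supplied by Theorem \ref{maj}; recognizing this is exactly what lets the recursion sidestep the balanced odd majorities and terminate with every leaf satisfying weight $=$ nonlinearity.
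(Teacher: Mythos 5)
Your proof is correct, and it is genuinely different from the paper's --- in fact it repairs a gap in the paper's own argument. The paper computes $wt(C(B(2n))^*)$ essentially as you do, but then finishes in one stroke: it asserts that ``simple estimates'' give $wt(C(B(2n))^*) < 2^{2n-2}$ and invokes Lemma \ref{Bu}. Since $C(B(2n))^*$ has truth table length $2^{2n-1}$, i.e.\ it is a function of $2n-1$ variables, Lemma \ref{Bu} actually requires $wt \le 2^{2n-3}$, not $wt < 2^{2n-2}$; and since the weight equals $2^{2n-2}-\frac{1}{2}\binom{2n}{n}$, the needed inequality amounts to $\binom{2n}{n} \ge 2^{2n-2}$, which holds for $n=3,4$ but fails for every $n \ge 5$ (for $n=5$ the weight is $130 > 128 = 2^{7}$) --- exactly the obstruction you identified. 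Your substitute for that step is sound: the superadditivity $N(F_0F_1)\ge N(F_0)+N(F_1)$ under concatenation, the splitting $T_{m,t}=T_{m-1,t}\,T_{m-1,t-1}$ with the matching weight identity $w(m,t)=w(m-1,t)+w(m-1,t-1)$, and the strong induction on $m$ over thresholds $t\ge\lceil m/2\rceil+1$, where Lemma \ref{Bu} handles the light leaves and Theorem \ref{maj} (via $C(T_{m,m/2+1})^*=M(m)$ and the identity $N(M(m))=w(m,m/2+1)$) supplies the one heavy leaf; the children's thresholds stay in range in every non-terminal case, and the trivial bound $N\le wt$ closes the induction. (One small point worth recording: for $m=2$, $t=2$ your first case applies since $w(2,2)=1\le 2^{0}$, so Theorem \ref{maj}, stated only for $n\ge 2$, is never needed below $m=4$.) What the paper's one-line finish buys is brevity, but only on $n\in\{3,4\}$; what your argument buys is a proof valid for all $n\ge 3$, together with a more general and independently interesting statement: weight equals nonlinearity for every symmetric threshold function whose threshold lies strictly above the middle.
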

\begin{proof}
Let $Q1(2n+1)$ denote the first quarter of the truth table of $M(2n+1).$ It follows from Lemma \ref{M2n} that
\begin{equation}
\label{Q1B2n}
Q1(2n+1) = C(B(2n))^*.
\end{equation}
If we let $Q1A(2n+1)$ and $Q1B(2n+1)$ denote, respectively, the left and right halves of the truth table for $Q1(2n+1)$   
and we let ${\bf b}_{2n-2}$ denote a bitstring of length $2n-2,$ then
$$Q1A(2n+1) = \{000 {\bf b}_{2n-2}: wt({\bf b}_{2n-2}) \geq n+1, {\bf b}_{2n-2}  \text{ in lexico order}\}$$
and
$$Q1B(2n+1) = \{001 {\bf b}_{2n-2}: wt({\bf b}_{2n-2}) \geq n, {\bf b}_{2n-2}  \text{ in lexico order}\}.$$ 
Therefore
\begin{equation}
\label{wtQ1A}
wt(Q1A(2n+1)) =  \sum_{j=n+1}^{2n-2}\binom{2 n - 2}{j} 
\end{equation}
and
\begin{equation}
\label{wtQ1B}
wt(Q1B(2n+1)) =  \sum_{j=n}^{2n-2}\binom{2 n - 2}{j}. 
\end{equation}
Now \eqref{B2na} follows from \eqref{Q1B2n}, \eqref{wtQ1A} and \eqref{wtQ1B}, and \eqref{B2nb} 
follows from \eqref{B2na} by elementary properties of binomial  coefficients.
 
Simple estimates using \eqref{B2nb} show that $wt(C(B(2n))^*) < 2^{2n-2}.$ Hence Lemma \ref{Bu}
 gives $wt(C(B(2n))^*) = N(C(B(2n))^*) = N(B(2n)).$
\end{proof}
The methods in Section \ref{majfnc} can be used to clarify various earlier results in the literature. For example,
it is easy to show that the functions $\phi_{2k}$ discussed in \cite[p. 106]{Dalai05} satisfy $\phi_{2k} = M(2k).$

\end{document}